\journal{Systems \& Control Letters}
\newtheorem{theorem}{Theorem}
\newtheorem{lemma}{Lemma}
\newtheorem{corollary}{Corollary}
\newdefinition{definition}{Definition}
\newdefinition{remark}{Remark}
\newproof{proof}{Proof}
\begin{document}
	
	\begin{frontmatter}

		\title{Strong Detectability and Observers for Linear Time Varying Systems} 
		
		 \author[irt]{Markus Tranninger\corref{cor1}}\ead{markus.tranninger@tugraz.at}
		 \author[cdlab]{Richard Seeber}\ead{richard.seeber@tugraz.at}
		 \author[btu]{Juan G. Rueda-Escobedo}\ead{ruedaesc@b-tu.de}
		 \author[irt,cdlab]{Martin Horn}\ead{martin.horn@tugraz.at}
		 
		 \address[irt]{Institute of Automation and Control, Graz University of Technology, Inffeldgasse 21B/II, 8010 Graz, Austria}

		 \address[cdlab]{Christian Doppler Laboratory for Model Based Control of Complex Test Bed Systems, Institute of Automation and Control, Graz University of Technology, Inffeldgasse 21B/II, 8010 Graz, Austria}
	             
	    \address[btu]{Fachgebiet Regelungssysteme und Netzleittechnik, Brandenburgische Technische Universität Cottbus - Senftenberg, Siemens-Halske-Ring 14, 03046 Cottbus, Germany}
    
    	\cortext[cor1]{corresponding author}

		\begin{keyword}                           
			time varying systems, state estimation, strong detectability, unknown input, Riccati equation
		\end{keyword}
	
\begin{abstract}                          
This work presents a notion of strong detectability for linear time varying systems affected by unknown inputs.
It is shown that this notion is equivalent to detectability of an auxiliary system without unknown inputs.
This allows a straightforward observer design for dependable state estimation in the presence of unknown inputs.
The design reduces to a deterministic Kalman filter design problem, where the observer gains can be obtained from the solution of a differential Riccati equation.
The efficacy of the proposed approach is demonstrated by means of a numerical example.
\end{abstract}

\end{frontmatter}

\section{Introduction}\label{sec:introduction}

State observers are important building bricks for feedback loop design and fault detection.
By means of inputs and measurable outputs of the system, the non-measurable system states are estimated.
In practical applications, observers have to deal with the fact that the measured signals are affected by unknown external disturbances. 
One strategy to cope with such disturbances is to model them as unknown system inputs and utilize unknown input observers~\cite{hautus1983strong,valcher1999state}.
Important applications for unknown input observers are, e.g., robust fault detection filter design~\cite{chen1998robust} or, more recently, cyber attack detection in distributed networked control systems~\cite{gallo2020adistributed}. 
The problem of state estimation in the presence of unknown inputs is closely related to the concept of strong detectability.
For linear time invariant systems, strong detectability is a well studied system property, which is extensively treated in~\cite{hautus1983strong,schumacher1980ontheminimal}. 
The developed theory, which is required for the design of unknown input observers~\cite{valcher1999state,hou1994fault} has found a wide variety of applications, see, e.g.~\cite{chen1998robust,gallo2020adistributed,edwards1998sliding}.

Hautus introduced the notions of strong observability, strong detectability and strong$^*$ detectability in a linear time invariant setting~\cite{hautus1983strong}.
The focus of the present work is on the latter notion, because in the linear time invariant case, strong$^*$ detectability is necessary and sufficient for the existence of an unknown input observer.
Moreover, neither strong detectability nor strong observability imply strong$^*$ detectability, see~\cite{hautus1983strong}.

One of the few contributions related to the aforementioned notions in the linear time varying case is~\cite{kratz1998alocal}.
There, conditions for strong observability, which are based on successive derivatives of the output signal, are presented.
These pointwise-in-time conditions, however, do not guarantee the existence of a linear observer and require the availability of the output's derivatives.
In~\cite{galvan-guerra2016highorder,tranninger2018sliding}, these derivatives are obtained by utilizing a robust exact higher order sliding mode differentiator.
The applicability of that approach, however, may be limited in the presence of measurement noise, due to the appearing higher order derivatives of the output signal.
Moreover, the unknown input is required to be differentiable and uniformly bounded.
In the present work, these properties of the unknown input as well as derivatives of the output signal are not required.

The present work extends the notion of strong$^*$ detectability to the linear time varying setting. 
This extension also guarantees the existence of a linear observer.
The necessary and sufficient existence conditions for this observer are stated in terms of classical detectability conditions for an auxiliary linear time varying systems.
It is moreover shown that the observer design in fact can be reduced to a classical observer design problem without unknown inputs.
The required observer gain can then be obtained from the solution of a differential Riccati equation. 
Moreover, different properties of strongly detectable systems are investigated based on their time invariant counterpart.

The paper is structured as follows:
The considered problem is stated Section~\ref{sec:problemstatement}.
Preliminaries related to classical detectability concepts are introduced in Sec.~\ref{sec:detectability}.
The main results, i.e., the concept of strong$^*$ detectability for linear time varying systems together with the proposed observer design, are introduced in Section~\ref{sec:mainresult}.
The obtained results are utilized for an unknown input observer design for a linearized version of the Lorenz'96 model in Section~\ref{sec:numericalexample} and the efficacy of the proposed approach is shown by numerical results.
Section~\ref{sec:discussion} draws conclusions and points out future research directions.

\emph{Notation:} Matrices are printed in boldface capital letters, whereas column vectors are boldface lower case letters. 
$\M^\dag$ denotes the Moore-Penrose pseudoinverse of the matrix $\M$.
The matrix $\I_n$ is the $n\times n$ identity matrix.
The 2-norm of a vector or the corresponding induced matrix norm is denoted by $\|\cdot\|$.
Symmetric positive definite (positive semidefinite) matrices $\M^\transp = \M$ are denoted by $\M \succ 0$ ($\M \succeq 0$). 
If for two symmetric matrices $\M_2 - \M_1 \succ 0$ ($\succeq 0$), then $\M_1\prec \M_2$ ($\M_1\preceq \M_2$).

In dynamical systems, differentiation of a vector $\x$ with respect to time $t$ is expressed as $\dot \x$.
Time dependency of state (usually $\x$) and input (usually $\w$) is omitted, whereas the time dependency of the system parameters is stated explicitly.

\section{Problem Statement}\label{sec:problemstatement}
Consider the linear time varying system
\begin{subequations}\label{eq:system}
\begin{align} 
	\dot \x &= \A(t)\x+\D(t)\w,\label{eq:systemstate} \\
	\y&= \C(t)\x \label{eq:systemoutput}
\end{align}
\end{subequations}
for $t\in\mathds J=[0,\infty)$ with the state $\x(t)\in\mathds{R}^n$, the output $\y(t)\in\mathds{R}^p$ and the unknown input $\w(t)\in\mathds{R}^q$. The matrices $\A(t)$, $\D(t)$ and $\C(t)$ of appropriate dimensions are assumed to be uniformly bounded. The goal is to asymptotically reconstruct the system state $\x(t)$ without knowledge of the input $\w(t)$ and to state conditions for that purpose.
It should be noted that no assumptions about differentiability or boundedness of the unknown input $\w(t)$ are made.

Without loss of generality, no known inputs are considered in system~\eqref{eq:system}, because they can always be eliminated and hence do not influence the observer error dynamics.

The following assumptions are made throughout the paper:
\begin{enumerate}
	\item[(A1)] The matrices $\A(t)$ and $\D(t)$ are uniformly bounded and differentiable with uniformly bounded derivatives
	\item[(A2)] The matrix $\C(t)$ is uniformly bounded and at least two times differentiable with uniformly bounded derivatives
	\item[(A3)] There exists a scalar $\gamma >0 $ such that $\mtGamma(t) =\C(t)\D(t)$ fulfills
	\begin{equation}
	\mtGamma^\transp(t) \mtGamma(t) \succeq \gamma \I_q.
	\end{equation}
\end{enumerate}
\textbf{Remark:} Assumption (A3) requires $p \ge q$ and is a uniform extension of the well known rank condition $\rank \C\D =\rank \D = q$, which is a necessary condition for strong$^*$ detectability in the linear time-invariant setting~\cite[Theorem 1.6]{hautus1983strong}.

\section{Preliminaries -- Detectability}\label{sec:detectability}
For stability analysis, the notions of asymptotic stability and uniform exponential stability, see, e.g. \cite{anderson2013stabilizability}, are relevant.
\begin{definition}[stability]
   \label{def:stability}
The linear system \mbox{$\dot \x = \A(t) \x$} is called
\begin{enumerate}[i)]
	\item
	\emph{asymptotically stable (AS)}, if its origin is Lyapunov stable and \mbox{$\lim_{t \to \infty} \x(t) = \bm{0}$} is satisfied for every solution $\x(t)$;
	\item
	\emph{uniformly exponentially stable (UES)}, if there exist positive constants $\mu$ and $M$ such that
	\begin{equation}\label{eq:ues_bound}
		\|{\x(t)}\| \le M e^{-\mu (t - t_0)} \|{\x(t_0)}\|
	\end{equation}
	holds for every solution $\x(t)$ and every $t_0$.
\end{enumerate}
\end{definition}

If the input is identically zero, i.e., $\w(t) = \vc 0$ for all $t\in\mathds J$, the condition for a successful observer design reduces to (uniform) detectability, which is introduced next.

\begin{definition}[uniform exponential detectability]
	System~\eqref{eq:system}, or equivalently the pair $(\A(t),\C(t))$, is called uniformly exponentially detectable, if there exists a uniformly bounded $n\times p$ matrix $\L(t)$ such that the system
	\begin{equation}\label{eq:error:dynamics}
	\dot\e = \left[\A(t)-\L(t)\C(t)\right]\e
	\end{equation}
	is uniformly exponentially stable.
\end{definition}
In the following, detectability refers to the above definition.

The pair $(\A(t),\C(t))$ is called (uniformly exponentially) stabilizable, if its corresponding dual system is detectable.
Details regarding duality between stabilizability and detectability can be found, e.g., in \cite{mueller2010normalized}.

Uniform exponential detectability is equivalent to the existence of an observer of the form 
\begin{equation}\label{eq:observer}
    \dot{\hat\x} = \A(t)\hat\x + \L(t)\left[\y-\C(t)\hat \x\right],
\end{equation}
where $\hat\x(t)$ is the state estimate, which induces the UES error dynamics~\eqref{eq:error:dynamics} with $\e(t)=\x(t)-\hat\x(t)$.
Often, this observer is designed with a (deterministic) Kalman filtering approach~\cite[Sec. 8.3]{sontag2013mathematical}.
In fact, detectability is equivalent to the existence of a uniformly bounded positive semidefinite solution of the filtering Riccati equation as summarized in the following, see~\cite[Lemma 3.4]{ravi1992normalized} and its corresponding proof.
\begin{lemma}[detectability and Riccati equation]\label{le:detectability}
	Let $\Q(t)\succeq 0$ such that $(\A(t),\Q^{\frac{1}{2}}(t))$ is stabilizable.
	Then, the pair $(\A(t),\C(t))$ is detectable if and only if there exists a uniformly bounded positive semidefinite solution $\P(t)$ to the Riccati equation
	\begin{equation}
		\dot \P = \A(t)\P + \P\A^\transp(t) - \P\C^\transp(t)\C(t)\P + \Q(t),
	\end{equation}
	Moreover,~\eqref{eq:error:dynamics} is uniformly exponentially stable with 
	\begin{equation}
		\L(t)=\P(t)\C^\transp(t).
	\end{equation}
\end{lemma}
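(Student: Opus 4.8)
The plan is to obtain the statement from the classical linear--quadratic (control) Riccati theory for linear time-varying systems \cite{anderson2013stabilizability,ravi1992normalized} by duality, the bridge being a completion-of-squares rewriting of the filtering Riccati equation: for any uniformly bounded $n\times p$ matrix $\M(t)$, with $\L=\P\C^\transp$,
\begin{equation*}
	\A\P+\P\A^\transp-\P\C^\transp\C\P+\Q=\bigl(\A-\M\C\bigr)\P+\P\bigl(\A-\M\C\bigr)^\transp+\Q+\M\M^\transp-\bigl(\M-\L\bigr)\bigl(\M-\L\bigr)^\transp .
\end{equation*}
Hence the Riccati vector field is the pointwise infimum, in the positive semidefinite order, over injection gains $\M$, attained at $\M=\L$, and the Riccati equation is equivalent to the Lyapunov differential equation $\dot\P=(\A-\L\C)\P+\P(\A-\L\C)^\transp+\Q+\P\C^\transp\C\P$ with positive semidefinite forcing term.

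For the direction ``detectable $\Rightarrow$ a uniformly bounded $\P\succeq0$ exists'', I would let $\P_{t_0}(t)$ be the solution of the Riccati equation with $\P_{t_0}(t_0)=0$. Detectability supplies a uniformly bounded injection gain $\L_\star(t)$ with $\A-\L_\star\C$ UES; applying the displayed identity with $\M=\L_\star$ and the comparison principle for matrix Riccati/Lyapunov differential equations gives $\P_{t_0}(t)\preceq\bar\P(t)$, where $\bar\P$ solves the Lyapunov equation with closed-loop matrix $\A-\L_\star\C$, forcing $\Q+\L_\star\L_\star^\transp$, and $\bar\P(t_0)=0$. Since that closed loop is UES and the forcing is uniformly bounded, $\bar\P$, hence $\P_{t_0}$, is bounded by some $\rho\I_n$ uniformly in $t_0$. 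Monotonicity of the Riccati flow in the initial time then yields a uniformly bounded positive semidefinite limit $\P(t)=\lim_{t_0\to-\infty}\P_{t_0}(t)$ solving the Riccati equation, and the stabilizability of $(\A,\Q^{\frac{1}{2}})$ identifies it as the stabilizing solution.

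For the converse, let $\P(t)\succeq0$ be uniformly bounded and solve the Riccati equation, and put $\L=\P\C^\transp$. From the Lyapunov form, variation of constants gives for $t_1\ge t_0$
\begin{equation*}
	\P(t_1)=\tilde\Phi(t_1,t_0)\P(t_0)\tilde\Phi^\transp(t_1,t_0)+\int_{t_0}^{t_1}\tilde\Phi(t_1,s)\bigl[\Q(s)+\P(s)\C^\transp(s)\C(s)\P(s)\bigr]\tilde\Phi^\transp(t_1,s)\,ds ,
\end{equation*}
where $\tilde\Phi$ is the transition matrix of $\dot\e=(\A-\L\C)\e$. Uniform boundedness of $\P$ thus bounds the reachability Gramian of $(\A-\L\C,\Q^{\frac{1}{2}})$ uniformly from above; combining this with the uniform stabilizability of $(\A,\Q^{\frac{1}{2}})$ — the dual of a detectability condition, which rules out unstable modes the forcing term cannot see — one concludes UES of $\A-\L\C$ via the uniform complete stabilizability/detectability results of \cite{ravi1992normalized,anderson2013stabilizability} applied to the adjoint (time-reversed) system. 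That the hypothesis on $\Q$ cannot be dropped is visible already for $\A=1$, $\C=0$, $\Q=0$: there $\P\equiv0$ is a uniformly bounded solution, yet the pair is not detectable.

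The step I expect to be the main obstacle is this last implication: passing from a merely bounded Riccati solution to uniform exponential stability of the closed loop is not automatic for time-varying systems and genuinely uses the stabilizability hypothesis. Executing it cleanly requires transporting the problem to the dual time-reversed system, invoking the standard linear time-varying control-Riccati theory there, and keeping all bounds uniform in $t$ and $t_0$ rather than only asymptotic; the time-varying duality is the delicate bookkeeping part.
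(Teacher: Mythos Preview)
The paper does not prove this lemma at all: it is stated as a known result and attributed to \cite[Lemma~3.4]{ravi1992normalized} together with its proof there. So there is no ``paper's own proof'' to compare against; your proposal is effectively a reconstruction of the argument in that reference, and the completion-of-squares/comparison approach you outline is indeed the standard route taken in \cite{ravi1992normalized,anderson2013stabilizability}.

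Two small remarks on your sketch. First, the paper works on $\mathds J=[0,\infty)$, so constructing $\P$ as $\lim_{t_0\to-\infty}\P_{t_0}$ tacitly extends the coefficients to negative times; for the lemma as stated it suffices to start from $\P(0)=0$ (or any $\P(0)\succeq0$) and use your comparison bound to get uniform boundedness on $\mathds J$, which avoids that detour. Second, in the converse direction you invoke stabilizability of $(\A,\Q^{1/2})$ to pass from a uniformly bounded reachability Gramian of $(\A-\L\C,\Q^{1/2})$ to UES of $\A-\L\C$; for this step to go through you need stabilizability of $(\A-\L\C,\Q^{1/2})$ rather than of $(\A,\Q^{1/2})$, and the two are not interchangeable in general since output injection is not the feedback transformation that preserves stabilizability. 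The argument in \cite{ravi1992normalized} handles this by working directly with the dual (control) Riccati equation rather than by first passing to the closed loop, which is precisely the ``delicate bookkeeping'' you flag at the end.
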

It should be remarked that the matrix $\Q(t)$ can be regarded as a design parameter and hence stabilizability of $(\A(t),\Q^{\frac{1}{2}}(t))$ can always be achieved.

In the time invariant case, detectability means that a vanishing output implies a vanishing state.
The following lemma shows that the considered detectability notion yields a similar property.
\begin{lemma}[output convergence]\label{le:det:convergence}
	If system~\eqref{eq:system} is detectable, then, $\lim_{t\rightarrow\infty} \y(t)= \bf 0$ for $\w(t)=\bm 0$ implies $\lim_{t\rightarrow\infty} \x(t) =\bf 0$.
\end{lemma}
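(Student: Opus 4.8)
The plan is to use detectability to rewrite the unforced state trajectory as the response of a uniformly exponentially stable system driven by a vanishing forcing term, and then to close the argument with a standard converging-input/converging-state estimate.

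First I would fix a uniformly bounded gain $\L(t)$ as provided by the definition of detectability, so that the dynamics $\dot\e = [\A(t)-\L(t)\C(t)]\e$ are UES; let $\bm\Phi(t,\tau)$ denote the associated transition matrix, which then obeys $\|\bm\Phi(t,\tau)\| \le M\mathrm{e}^{-\mu(t-\tau)}$ for some constants $M,\mu>0$ and all $t\ge\tau$. By Assumption (A1) the trajectory $\x(t)$ of $\dot\x=\A(t)\x$ exists and is unique, and adding and subtracting $\L(t)\C(t)\x$ yields
\[
  \dot\x = [\A(t)-\L(t)\C(t)]\x + \L(t)\y(t),
\]
since $\y=\C(t)\x$ when $\w\equiv\bm 0$. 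Variation of constants then gives
\[
  \x(t) = \bm\Phi(t,0)\,\x(0) + \int_0^t \bm\Phi(t,\tau)\,\L(\tau)\,\y(\tau)\,\mathrm{d}\tau .
\]

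Next I would show that each term on the right vanishes as $t\to\infty$. The first term decays exponentially by the UES bound. For the integral, set $\bm u(\tau):=\L(\tau)\y(\tau)$; since $\L(\cdot)$ is uniformly bounded and $\y(t)\to\bm 0$ by hypothesis, $\bm u(t)\to\bm 0$. Given $\varepsilon>0$, choose $T$ with $\|\bm u(\tau)\|\le\varepsilon$ for all $\tau\ge T$, split the integral at $T$, bound the tail from $T$ to $t$ by $\frac{M}{\mu}\varepsilon$ and the head from $0$ to $T$ by $M\mathrm{e}^{-\mu(t-T)}\!\int_0^T\|\bm u(\tau)\|\,\mathrm{d}\tau$, which tends to $0$ as $t\to\infty$. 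Hence $\limsup_{t\to\infty}\|\x(t)\|\le \tfrac{M}{\mu}\varepsilon$ for every $\varepsilon>0$, i.e.\ $\x(t)\to\bm 0$.

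I do not expect a genuine obstacle here; the only point needing a little care is this last estimate, namely the classical fact that a uniformly exponentially stable linear system forced by a bounded input converging to zero has a state converging to zero, together with checking that the regularity in (A1) makes the variation-of-constants step legitimate. Alternatively, the argument can be phrased through the observer~\eqref{eq:observer} initialized at $\hat\x(0)=\bm 0$: the estimation error obeys the UES dynamics~\eqref{eq:error:dynamics}, so $\e(t)\to\bm 0$, while the same converging-input argument applied to $\dot{\hat\x}=[\A(t)-\L(t)\C(t)]\hat\x+\L(t)\y(t)$ gives $\hat\x(t)\to\bm 0$, whence $\x(t)=\e(t)+\hat\x(t)\to\bm 0$.
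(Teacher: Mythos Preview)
Your proof is correct and follows essentially the same route as the paper: both rewrite the dynamics as a uniformly exponentially stable system forced by $\L(t)\y(t)$ and then invoke the converging-input/converging-state property (the paper cites \cite[Theorem~59.1]{hahn1967stability} for this, while you spell out the standard splitting estimate). Your main argument is in fact marginally more direct than the paper's, since you apply this reasoning to $\x$ itself via $\dot\x = [\A(t)-\L(t)\C(t)]\x + \L(t)\y(t)$, whereas the paper applies it to the observer state $\hat\x$ and then combines $\hat\x\to\bm 0$ with $\e\to\bm 0$; your alternative phrasing at the end coincides with the paper's version.
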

\begin{proof}
Detectability implies the existence of a uniformly bounded $\L(t)$ such that~\eqref{eq:observer} is an observer for~\eqref{eq:system} and~\eqref{eq:error:dynamics} is UES.
Hence, one obtains
\begin{equation}\label{eq:observer:outputinjection}
	\dot{\hat\x}= \left[\A(t)-\L(t)\C(t)\right] \hat \x + \L(t)\C(t)\y.
\end{equation}
According to~\cite[Theorem 59.1]{hahn1967stability}, it then follows from $\y(t)\rightarrow \bf 0$ that $\hat \x(t) \rightarrow \bf 0$ and hence $\x(t)\rightarrow \bf 0$.\hfill\qed
\end{proof}
In the time invariant case, also the converse of Lemma~\ref{le:det:convergence} holds. 
This is not true in the time varying case.
To see this, consider the system
\begin{equation}
	\dot x =-\frac{1}{t+ 1}x,\quad y=0,
\end{equation} 
which is asymptotically stable, and hence the above implication is trivially fulfilled.
For this system, however, no observer of the form~\eqref{eq:observer} exists, for which the estimation error dynamics is uniformly exponentially stable.

\section{Main Results -- Uniform Strong$^*$ Detectability}\label{sec:mainresult}
The concept presented in this section allows to asymptotically reconstruct the system states without explicitly differentiating the output signal in the presence of an unknown input $\w(t)$. 

Similar to the standard detectability notion, this concept is now introduced via the existence of a suitable observer and its corresponding error system.
\begin{definition}[uniform strong$^*$ detectability]\label{def:strongdetectability}
System~\eqref{eq:system} is called uniformly strong$^*$ detectable, if there exists an observer of the form
\begin{subequations}\label{eq:strongobsv}
	\begin{align}
	\dot \z &= \N(t)\z(t) + \R(t)\y(t)\label{eq:strongobsv_z} \\
	\hat \x &= \z + \S(t)\y \label{eq:strongobsv_x}
	\end{align}
\end{subequations}
with the estimate $\hat\x(t)\in\mathds R^n$, uniformly bounded matrices $\N(t), \R(t)$, and a uniformly bounded Lipschitz continuous matrix $\S(t)$ such that the estimation error $\e=\x-\hat\x$ satisfies
	\begin{equation}\label{eq:error:ues_bound}
		\|{\e(t)}\| \le M e^{-\mu (t - t_0)} \|{\e(t_0)}\|
	\end{equation}
for all $t\geq t_0$, $t_0 \in\mathds J$ and some constant $\mu$, $M>0$.
The observer~\eqref{eq:strongobsv} is then called a \emph{strong observer}.
\end{definition}

In order to eventually obtain conditions for uniform strong$^*$ detectability, the following lemma further specifies the properties of the observer~\eqref{eq:strongobsv}.
\begin{lemma}
	System~\eqref{eq:strongobsv} is a strong observer for~\eqref{eq:system}, if and only if the following relations are fulfilled simultaneously:
\begin{enumerate}
	\item[(r1)] $\dot\e = \N(t)\e$ is UES,
	\item[(r2)] $\N(t)-\A(t)+(\R(t)-\N(t)\S(t)+\dot\S(t))\C(t) + \S(t)(\C(t)\A(t)+\dot\C(t))=\vc 0$,
	\item[(r3)] $\D(t)-\S(t)\C(t)\D(t)=\vc 0$.
\end{enumerate}
\end{lemma}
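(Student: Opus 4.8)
\emph{Proof proposal.} The plan is to bring the estimation error $\e=\x-\hat\x$ into a closed dynamical form and then read off the three conditions. Using \eqref{eq:strongobsv_x} and $\y=\C(t)\x$, I would first write $\e=(\I_n-\S(t)\C(t))\x-\z$, differentiate this identity, and substitute $\dot\x=\A(t)\x+\D(t)\w$, $\dot\z=\N(t)\z+\R(t)\C(t)\x$ from \eqref{eq:system} and \eqref{eq:strongobsv}, together with $\z=(\I_n-\S(t)\C(t))\x-\e$ to eliminate $\z$ again. Collecting the terms in $\e$, $\x$ and $\w$ yields an error system of the form
\begin{equation*}
	\dot\e=\N(t)\e-\bm\Phi(t)\x+\bm\Psi(t)\w,
\end{equation*}
where $\bm\Phi(t)$ is precisely the left-hand side of (r2) and $\bm\Psi(t)=\D(t)-\S(t)\C(t)\D(t)$ is the left-hand side of (r3). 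Establishing this identity is the only computation involved and amounts to routine bookkeeping (using $\tfrac{d}{dt}(\I_n-\S\C)=-\dot\S\C-\S\dot\C$).

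For the ``if'' part, assuming (r2) and (r3) makes $\bm\Phi\equiv\vc 0$ and $\bm\Psi\equiv\vc 0$, so the error obeys the homogeneous system $\dot\e=\N(t)\e$; by (r1) this system is UES, which is exactly the bound \eqref{eq:error:ues_bound}, hence \eqref{eq:strongobsv} is a strong observer.

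For the ``only if'' part, I would exploit that the estimate \eqref{eq:error:ues_bound} must hold for \emph{every} plant initial state $\x(t_0)$, \emph{every} admissible unknown input $\w(\cdot)$, and \emph{every} observer initial state $\z(t_0)$, and that these can be prescribed independently --- in particular the choice $\z(t_0)=(\I_n-\S(t_0)\C(t_0))\x(t_0)$ forces $\e(t_0)=\vc 0$. Fixing an arbitrary $\tau\in\mathds J$, taking $\e(\tau)=\vc 0$ and applying \eqref{eq:error:ues_bound} with $t_0=\tau$ gives $\e(t)\equiv\vc 0$ on $[\tau,\infty)$, so $\dot\e(\tau)=\vc 0$; but from the error system $\dot\e(\tau)=-\bm\Phi(\tau)\x(\tau)+\bm\Psi(\tau)\w(\tau)$. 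Choosing first $\w\equiv\vc 0$ with $\x(\tau)$ arbitrary yields $\bm\Phi(\tau)=\vc 0$, i.e.\ (r2); choosing then $\x(\tau)=\vc 0$ with $\w(\tau)$ arbitrary yields $\bm\Psi(\tau)=\vc 0$, i.e.\ (r3). As $\tau$ was arbitrary, (r2) and (r3) hold for all $t$, where, in view of the merely Lipschitz (rather than $C^1$) regularity of $\S(t)$, (r2) is understood for almost all $t$. With $\bm\Phi\equiv\bm\Psi\equiv\vc 0$ the error reduces to $\dot\e=\N(t)\e$, and then \eqref{eq:error:ues_bound} is literally the defining estimate of uniform exponential stability, so (r1) follows.

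The main obstacle I anticipate is not the algebra but making the necessity step airtight: one must argue carefully that a strong observer is required to deliver the bound for all initializations and all (non-differentiable, possibly unbounded) inputs, so that the independent choices of $\x(\tau)$, $\z(\tau)$ and $\w(\tau)$ above are legitimate, and that $\dot\e(\tau)$ exists along the chosen trajectory --- which is guaranteed by (A1), (A2) and the Lipschitz continuity of $\S(t)$, at the price of reading (r2) in an almost-everywhere sense.
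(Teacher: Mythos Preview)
Your proposal is correct and follows essentially the same approach as the paper: both derive the closed-form error dynamics $\dot\e=\N\e-\bm\Phi\x+\bm\Psi\w$ (the paper via $\dot{\hat\x}$ and $\hat\x=\x-\e$, you via $\e=(\I_n-\S\C)\x-\z$) and read off (r1)--(r3). If anything, your necessity argument is more explicit than the paper's, which simply asserts that ``necessity and sufficiency of the relations (r1)--(r3) follow directly'' from the error equation, and your remark about (r2) holding almost everywhere matches the paper's observation that $\dot\S$ exists only almost everywhere.
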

\begin{proof}
In the following, the time dependency is omitted for a better readability.
Differentiating~\eqref{eq:strongobsv_x} gives
\begin{equation}
	\dot{\hat\x} = \dot \z + \dot\S \y +\S \dot \y
\end{equation}
and by using relation~\eqref{eq:strongobsv_z} one obtains
\begin{equation}
	\dot{\hat\x} = \N\hat\x + (\R+\dot\S -\N\S)\y + \S \dot \y.
\end{equation}
Due to Lipschitz continuity of $\S$, $\dot \S$ exists almost everywhere and is uniformly bounded. 
With $\y=\C\x$ and $\dot\y = (\C\A+\dot\C)\x+\C\D\w$, the error system $\dot \e = \dot \x - \dot{\hat \x}$ can be stated as
\begin{equation}
	\begin{aligned}
		\dot\e &= \dot \x -\dot{\hat\x} = \A\x+\D\w-\N\hat\x-(\R-\N\S+\dot\S)\y - \S\dot\y\\
		&= \A\x+\D\w-\N\hat\x -(\R-\N\S+\dot{\S})\C\x\\ &-\S(\C\A+\dot\C)\x-\S\C\D\w\\
	\end{aligned}
\end{equation}
Together with the relation $\hat\x =\x-\e$, this yields
\begin{equation}\label{eq:errorsys}
	\begin{aligned}
		\dot \e &= \N\e -\left[\N-\A+(\R-\N\S+\dot\S)\C + \S(\C\A+\dot\C)\right]\x \\ &+ (\D-\S\C\D)\w.
	\end{aligned}
\end{equation}	
Hence, necessity and sufficiency of the relations (r1)--(r3) follow directly.
\end{proof}
Up to now, it is not obvious how to design the observer~\eqref{eq:observer}.
The design can be formulated as a classical observer design problem based on the following main result.
\begin{theorem}\label{thm:equivalence}
	Under the assumptions (A1)--(A3), the following statements are equivalent
	\begin{enumerate}
		\item[\textit{(i)}] System~\eqref{eq:system} is uniformly strong$^*$ detectable.
		\item[\textit{(ii)}] The pair $\left(\tilde \A(t),\tilde \C(t)\right) $ with
		\begin{equation}\label{eq:Atilde}
		\begin{aligned}
		\tilde{\A}(t) &= \A(t)-\D(t)\mtGamma^\dag(t)\left[\C(t)\A(t)+\dot \C(t)\right], \quad \\ \tilde \C(t) &= \begin{bmatrix}
		\C(t)\\ \C(t)\tilde\A(t)+\dot \C(t)
		\end{bmatrix},
		\end{aligned}
		\end{equation}
		is uniformly exponentially detectable.
	\end{enumerate}
\end{theorem}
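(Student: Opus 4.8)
The plan is to reduce everything to an algebraic manipulation of the relations (r1)--(r3) of the preceding lemma and to exhibit a one-to-one correspondence between strong observers~\eqref{eq:strongobsv} and output-injection gains that render $(\tilde\A(t),\tilde\C(t))$ uniformly exponentially stable. First I would solve relation (r3). By (A3) the matrix $\mtGamma(t)=\C(t)\D(t)$ has uniformly full column rank, hence $\mtGamma^\dag=(\mtGamma^\transp\mtGamma)^{-1}\mtGamma^\transp$ with $\mtGamma^\dag\mtGamma=\I_q$; together with (A1)--(A2) (this is where the second derivative of $\C$ is needed, since $\dot\C$ enters $\tilde\A$ and $\tilde\C$) one checks that $\mtGamma^\dag$, $\tilde\A$ and $\tilde\C$ are uniformly bounded and Lipschitz continuous. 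Relation (r3), i.e. $\S\mtGamma=\D$, then has the general solution $\S=\D\mtGamma^\dag+\mathbf W(\I_p-\mtGamma\mtGamma^\dag)$ with an arbitrary uniformly bounded Lipschitz matrix $\mathbf W(t)$, since $\D\mtGamma^\dag\mtGamma=\D$ and $(\I_p-\mtGamma\mtGamma^\dag)\mtGamma=\vc 0$.

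The key step is the identity $\C\tilde\A+\dot\C=(\I_p-\mtGamma\mtGamma^\dag)(\C\A+\dot\C)$, which follows at once from~\eqref{eq:Atilde} and $\C\D=\mtGamma$, so that the second block row of $\tilde\C$ equals $(\I_p-\mtGamma\mtGamma^\dag)(\C\A+\dot\C)$. Using it, any admissible $\S$ satisfies $\S(\C\A+\dot\C)=\D\mtGamma^\dag(\C\A+\dot\C)+\mathbf W(\C\tilde\A+\dot\C)=(\A-\tilde\A)+\mathbf W(\C\tilde\A+\dot\C)$. Substituting this into (r2) and introducing $\bar{\mathbf R}=\R-\N\S+\dot\S$ (a change of variables that is invertible for fixed $\N,\S$) collapses (r2) to $\N=\tilde\A-\bar{\mathbf R}\,\C-\mathbf W(\C\tilde\A+\dot\C)=\tilde\A-\L\tilde\C$ with $\L=[\,\bar{\mathbf R}\ \ \mathbf W\,]$. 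Hence a triple $(\N,\R,\S)$ satisfies (r2)--(r3) with the required regularity if and only if $\N=\tilde\A-\L\tilde\C$ for some uniformly bounded gain $\L$ (with its second block Lipschitz, so that $\S$ is), and (r1) says precisely that this $\L$ makes the error dynamics of $(\tilde\A,\tilde\C)$ UES.

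It then remains to assemble the two implications. For (i) $\Rightarrow$ (ii): the preceding lemma gives (r1)--(r3); taking $\mathbf W=\S-\D\mtGamma^\dag$ and $\bar{\mathbf R}=\R-\N\S+\dot\S$, both uniformly bounded, one obtains $\N=\tilde\A-\L\tilde\C$ UES, i.e. $(\tilde\A,\tilde\C)$ detectable. For (ii) $\Rightarrow$ (i): by Lemma~\ref{le:detectability}, applied with a design matrix $\Q(t)\succeq 0$ chosen so that $(\tilde\A,\Q^{\frac{1}{2}})$ is stabilizable, there is a uniformly bounded $\P(t)\succeq 0$ solving the associated Riccati equation, and $\L=\P\tilde\C^\transp$ makes $\dot\e=(\tilde\A-\L\tilde\C)\e$ UES; uniform boundedness of $\tilde\A,\tilde\C,\Q$ makes $\P$, hence $\L$, Lipschitz. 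Partitioning $\L=[\,\L_1\ \ \L_2\,]$ conformably with the two block rows of $\tilde\C$ and setting $\S=\D\mtGamma^\dag+\L_2(\I_p-\mtGamma\mtGamma^\dag)$, $\N=\tilde\A-\L\tilde\C$, $\R=\L_1+\N\S-\dot\S$, the relations (r1)--(r3) hold by reversing the above computation, so~\eqref{eq:strongobsv} is a strong observer.

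I expect the main obstacle to be the algebraic reduction of (r2): spotting the identity $\C\tilde\A+\dot\C=(\I_p-\mtGamma\mtGamma^\dag)(\C\A+\dot\C)$ and recognizing that the homogeneous part $\mathbf W(\I_p-\mtGamma\mtGamma^\dag)$ of the solution of (r3) is exactly the gain acting on the second block row of $\tilde\C$; this is what forces the appearance of $\tilde\C$ rather than merely $\C$. A secondary technical point is the regularity bookkeeping — establishing uniform boundedness and Lipschitz continuity of $\mtGamma^\dag,\tilde\A,\tilde\C,\S$ from (A1)--(A3) and upgrading a merely bounded detectability gain to a Lipschitz one via the Riccati characterization of Lemma~\ref{le:detectability}.
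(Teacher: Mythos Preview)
Your proposal is correct and follows essentially the same route as the paper: parametrize the solutions of (r3) as $\S=\D\mtGamma^\dag+\mathbf W(\I_p-\mtGamma\mtGamma^\dag)$, use the identity $\C\tilde\A+\dot\C=(\I_p-\mtGamma\mtGamma^\dag)(\C\A+\dot\C)$ to collapse (r2) into $\N=\tilde\A-\bar{\mathbf R}\,\C-\mathbf W(\C\tilde\A+\dot\C)$, and for the converse direction construct the gains via the Riccati solution of Lemma~\ref{le:detectability}. The paper uses the same algebra with $\L_1,\L_2$ in place of your $\bar{\mathbf R},\mathbf W$, and the same Riccati-based construction $\L_1=\P\C^\transp$, $\L_2=\P(\C\tilde\A+\dot\C)^\transp$, $\R=\L_1+\N\S-\dot\S$; your added remark that the Riccati route automatically supplies the Lipschitz regularity of $\L_2$ needed for $\S$ is a point the paper handles only implicitly.
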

\begin{remark}
    The use of this equivalence for the purpose of observer design will be discussed in Section~\ref{sec:obsvdesign}.
\end{remark}
The equivalence is proven in the following two subsections.
\subsection{Item (i) implies item (ii)}
All matrices $\S(t)$ that fulfill (r3) can be parametrized as
\begin{equation}\label{eq:decoupling}
\S = \D \mtGamma^\dag + \L_2 \left[\I_p-\mtGamma\mtGamma^\dag\right]
\end{equation}
with any Lipschitz continuous $\L_2(t)\in\mathds{R}^{n\times p}$. 
Hence, for a given $\S(t)$, it is possible to obtain a (non-unique) $\L_2(t)$, such that~\eqref{eq:decoupling} is fulfilled.
Because $\mtGamma(t)$ has full rank and is Lipschitz continuous, the pseudo-inverse $\mtGamma^\dag(t)$ and its derivative exist and are uniformly bounded, see~\cite[Thm. 10.5.3]{campbell2009generalized}.

From (r2), it can then be concluded that
\begin{equation}
\N = \A- (\R-\N\S + \dot\S)\C - \S(\C\A + \dot \C)
\end{equation}
in combination with~\eqref{eq:decoupling} and 
\begin{equation}\label{eq:strongobsv:L1}
	\L_1 = \R-\N\S + \dot\S
\end{equation}
yields
\begin{equation}
\N = \A-\D\mtGamma^\dag(\C\A+\dot\C) - \L_1\C - \L_2(\I_p-\mtGamma\mtGamma^\dag)(\C\A+\dot\C).
\end{equation}
With $\mtGamma=\C\D$ and $\tilde \A$ as in~\eqref{eq:Atilde}, one obtains
\begin{align}
    \C \tilde\A + \dot \C &= \C \A + \dot \C - \C \D \mtGamma^\dag (\C \A + \dot \C) \nonumber \\
    &= \left[\I_p-\mtGamma\mtGamma^\dag\right]\left[\C\A+\dot\C\right].
\end{align}
From here, it follows that
\begin{equation}\label{eq:Nreduced}
\N = \tilde \A -\L_1\C -\L_2(\C\tilde\A+\dot \C).
\end{equation}
Because of (r1), it then can be concluded that 
\begin{equation}
\left(\tilde \A,\begin{bmatrix}
\C\\ \C\tilde\A+\dot \C
\end{bmatrix}\right)
\end{equation}
is uniformly exponentially detectable.
Boundedness of the gains $\L_1$ and $\L_2$ is guaranteed by the existence of observer~\eqref{eq:observer}.

\subsection{Item (ii) implies item (i)}\label{sec:proof:part2}
According to Lemma~\ref{le:detectability}, uniform detectability of $(\tilde \A(t),\tilde \C(t))$ is equivalent to the existence of a uniformly bounded solution $\P(t)\succeq 0$ to the Riccati equation
\begin{equation}\label{eq:strongobsv:Riccati}
	\dot \P = \tilde \A(t)\P + \P\tilde \A(t)^\transp -\P \tilde \C(t)^\transp \tilde \C(t) \P + \Q(t),\; \P(0)\succeq 0,
\end{equation}
where the positive definite matrix $\Q(t)\succ 0$ is considered as a tuning parameter.
Moreover, the system
\begin{equation}\label{eq:errorsystem:obsv}
	\dot\e = \left[ \tilde\A(t)-\P(t)\tilde \C^\transp(t)\tilde \C(t)\right]\e
\end{equation}
is uniformly exponentially stable.
The error system~\eqref{eq:errorsystem:obsv} is equivalent to $\dot\e=\N(t)\e$ with $\N(t)$ as in~\eqref{eq:Nreduced} and
\begin{equation}\label{eq:strongobsv:gains}
	\L_1(t)= \P(t) \C^\transp(t),\quad \L_2(t) = \P(t)(\C(t)\tilde\A(t)+\dot\C(t))^\transp.
\end{equation}
From~\eqref{eq:strongobsv:L1} it then follows that
\begin{equation}\label{eq:R}
\R(t) = \L_1(t) + \N(t)\S(t)- \dot\S(t),
\end{equation}
where $\S(t)$ is obtained from~\eqref{eq:decoupling} together with~\eqref{eq:strongobsv:gains} .
The derivative $\dot \S(t)$ is uniformly bounded because of (A1)-(A3) and the boundedness of the derivative of $\L_1(t)$ and $\L_2(t)$ in~\eqref{eq:strongobsv:gains}.
Therefore, item \textit{(ii)} is sufficient for the existence of an observer of the form~\eqref{eq:strongobsv}, and thus, following Definition~\ref{def:strongdetectability}, the system~\eqref{eq:system} is uniformly strong$^*$ detectable. \hfill \qed

\subsection{Observer design}
\label{sec:obsvdesign}
The results stated in Theorem~\ref{thm:equivalence} and in Section~\ref{sec:proof:part2} can be utilized for the observer design.
Due to Theorem~\ref{thm:equivalence}, it is possible to design an observer for an auxiliary system
\mbox{$	\dot {\bm \xi} = \tilde \A(t) {\bm \xi},\; \y_\xi =\tilde \C(t) {\bm \xi}$}
with $\tilde \A(t)$ and $\tilde \C(t)$ as in~\eqref{eq:Atilde}.
The observer gains $\L_1(t)$ and $\L_2(t)$ are obtained from the solution of the Riccati differential equation~\eqref{eq:strongobsv:Riccati} and~\eqref{eq:strongobsv:gains}.
For the implementation of the observer, one then uses the form~\eqref{eq:strongobsv} with
$\S(t)$, $\N(t)$ and $\R(t)$ given by~\eqref{eq:decoupling},~\eqref{eq:Nreduced} and \eqref{eq:R}, respectively.

\subsection{Discussion and Extensions}

This section discusses some consequences and extensions of the main results.

The following result shows that uniform strong$^*$ detectability is sufficient for strong$^*$ detectability as introduced in~\cite[Definition 1.3]{hautus1983strong} for linear time invariant systems.
\begin{corollary}
	If system~\eqref{eq:system} is uniformly strong$^*$ detectable, then, $\lim_{t\rightarrow\infty} \y(t)= \bf 0$ implies $\lim_{t\rightarrow\infty}\x(t)= \bf 0$, independent of the unknown input $\w(t)$.
\end{corollary}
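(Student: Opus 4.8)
The plan is to build directly on the strong observer whose existence is guaranteed by uniform strong$^*$ detectability, rather than to redo the output-differentiation argument. Let $\N(t),\R(t),\S(t)$ be the matrices of a strong observer of the form~\eqref{eq:strongobsv}, and recall from relations (r1)--(r3) that the estimation error $\e=\x-\hat\x$ obeys $\dot\e=\N(t)\e$, which is UES. The decisive point is that this error equation contains \emph{no} $\w$-dependent term, so $\e(t)\to\bm 0$ exponentially for every initial condition and every unknown input $\w(t)$; this is precisely where strong$^*$ detectability, as opposed to ordinary detectability, enters, and it is what makes the conclusion independent of $\w$.

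Next I would decompose the state as $\x=\e+\hat\x=\e+\z+\S(t)\y$ and handle the three summands separately. The term $\e(t)\to\bm 0$ by the preceding paragraph. For $\S(t)\y$, uniform boundedness of $\S(t)$ together with the hypothesis $\y(t)\to\bm 0$ immediately gives $\S(t)\y(t)\to\bm 0$. It therefore remains to show $\z(t)\to\bm 0$.

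For $\z$ I would view~\eqref{eq:strongobsv_z} as the forced linear system $\dot\z=\N(t)\z+\R(t)\y$: the homogeneous part $\dot\z=\N(t)\z$ is UES by (r1), and the forcing $\R(t)\y(t)$ tends to $\bm 0$ because $\R(t)$ is uniformly bounded and $\y(t)\to\bm 0$. Applying the same converging-forcing/converging-state property of uniformly exponentially stable time-varying linear systems that was used in the proof of Lemma~\ref{le:det:convergence} (cf.~\cite[Theorem~59.1]{hahn1967stability}) then yields $\z(t)\to\bm 0$, and combining the three limits gives $\x(t)\to\bm 0$, as claimed.

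I do not expect a genuine obstacle; the statement is a short corollary of Definition~\ref{def:strongdetectability} and Theorem~\ref{thm:equivalence}. The only points needing a little care are checking that all boundedness hypotheses ($\S(t)$ and $\R(t)$ uniformly bounded, $\y(t)$ bounded because convergent) are indeed available from the construction, and correctly invoking the asymptotic-stability-under-vanishing-forcing result in the time-varying setting; both are routine.
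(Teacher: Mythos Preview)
Your proposal is correct and follows precisely the route the paper indicates: the paper's proof is the one-line remark that the corollary follows directly from the strong observer~\eqref{eq:strongobsv} under the assumption $\y\to\bm 0$, and you have simply unpacked this into the decomposition $\x=\e+\z+\S(t)\y$ together with the vanishing-forcing argument via~\cite[Theorem~59.1]{hahn1967stability}. The only superfluous reference is to Theorem~\ref{thm:equivalence}, which is not needed here; Definition~\ref{def:strongdetectability} and relations (r1)--(r3) suffice.
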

The proof follows directly from the observer~\eqref{eq:strongobsv} and assuming $\y\rightarrow \bf 0$ for $t\rightarrow \infty$.
In the time invariant case, uniform strong$^*$ detectability is also necessary for strong$^*$ detectability, see~\cite{darouach1994fullorder} for the continuous time and ~\cite{valcher1999state} for the discrete time case. 
Similar to the standard detectability case, the converse is not true.
A counter-example, which satisfies assumptions (A1)--(A3), is given by the system
\begin{equation}
	\begin{aligned}
		\dot \x &= \begin{bmatrix}
			0 &0 \\ 0 &-\frac{1}{t}
		\end{bmatrix}\x + \begin{bmatrix}
			1\\ 0
		\end{bmatrix}w, \quad
		y = \begin{bmatrix}
			1 & 0 
		\end{bmatrix}\x,
	\end{aligned}
\end{equation}
for which $\lim_{t\rightarrow\infty}y(t) = 0$ implies $\lim_{t\rightarrow\infty}\x(t) = 0$.
The pair $(\tilde \A(t), \C(t))$ with $\tilde \A = \A$ is not detectable and hence no strong observer of the form~\eqref{eq:strongobsv} exists.

In some situations, the detectability condition $(ii)$ of Theorem~\ref{thm:equivalence} can be simplified, which is stated in the following corollary.
\begin{corollary}
	For $p=q$, the pair $\left(\tilde \A(t),\tilde \C(t)\right)$ is detectable if and only if the pair 
	$\left(\tilde \A(t),\C(t)\right)$ is detectable.
\end{corollary}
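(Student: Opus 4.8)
The plan is to observe that, for $p=q$, assumption (A3) forces $\mtGamma(t)=\C(t)\D(t)$ to be a square matrix with $\mtGamma^\transp(t)\mtGamma(t)\succeq\gamma\I_q\succ0$, hence of full rank $q=p$ and uniformly invertible; consequently $\mtGamma^\dag(t)=\mtGamma^{-1}(t)$ and $\mtGamma(t)\mtGamma^\dag(t)=\I_p$. Substituting this into the identity
\[
\C\tilde\A+\dot\C=\C\A+\dot\C-\C\D\mtGamma^\dag(\C\A+\dot\C)=\left[\I_p-\mtGamma\mtGamma^\dag\right]\left[\C\A+\dot\C\right]
\]
already established in the proof of Theorem~\ref{thm:equivalence}, the right-hand side collapses to zero, so $\C(t)\tilde\A(t)+\dot\C(t)=\vc 0$ and the second block row of $\tilde\C(t)$ vanishes identically:
\[
\tilde\C(t)=\begin{bmatrix}\C(t)\\ \vc 0\end{bmatrix}.
\]

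It then remains to show that detectability of $(\tilde\A(t),\tilde\C(t))$, with $\tilde\C(t)$ of this reduced form, is equivalent to detectability of $(\tilde\A(t),\C(t))$, i.e., that appending zero rows to the output matrix does not alter detectability. For the direction ``$\Leftarrow$'' I would take a uniformly bounded $\L_a(t)$ rendering $\dot\e=[\tilde\A(t)-\L_a(t)\C(t)]\e$ UES; then $\L(t)=\begin{bmatrix}\L_a(t)&\vc 0\end{bmatrix}$ is uniformly bounded and satisfies $\tilde\A-\L\tilde\C=\tilde\A-\L_a\C$, so $(\tilde\A,\tilde\C)$ is detectable with this gain. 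For the direction ``$\Rightarrow$'' I would partition a uniformly bounded gain $\L(t)=\begin{bmatrix}\L_a(t)&\L_b(t)\end{bmatrix}$ for $(\tilde\A,\tilde\C)$ conformably with the two block rows of $\tilde\C$; since the second block row is zero, $\tilde\A-\L\tilde\C=\tilde\A-\L_a\C$, and $\L_a(t)$ inherits uniform boundedness as a submatrix of $\L(t)$, so $(\tilde\A(t),\C(t))$ is detectable. This yields the claimed equivalence.

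I do not expect a genuine obstacle here; the argument is essentially bookkeeping. The only point requiring a moment of care is tracking uniform boundedness of the gains in both directions, but this is immediate, since a block of a uniformly bounded matrix is uniformly bounded and augmenting a uniformly bounded matrix with zero columns preserves the bound. Alternatively, one could argue through Lemma~\ref{le:detectability}: when the second block row of $\tilde\C$ vanishes one has $\tilde\C^\transp\tilde\C=\C^\transp\C$, so the filtering Riccati equations for $(\tilde\A,\tilde\C)$ and $(\tilde\A,\C)$ with the same tuning matrix $\Q(t)$ coincide and thus admit a bounded positive semidefinite solution for one pair if and only if they do for the other; the direct output-injection argument above is, however, shorter and avoids invoking the Riccati characterization.
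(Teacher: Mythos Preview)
Your proposal is correct and follows essentially the same route as the paper: the paper's justification is the single observation that for $p=q$ the pseudoinverse reduces to the inverse, whence $\C(t)\tilde\A(t)+\dot\C(t)=\vc 0$, which is precisely your first step. You additionally spell out why replacing $\tilde\C(t)=\begin{bmatrix}\C(t)\\ \vc 0\end{bmatrix}$ by $\C(t)$ preserves detectability via the block-gain argument (or equivalently via $\tilde\C^\transp\tilde\C=\C^\transp\C$ in the Riccati characterization), a point the paper leaves implicit; this extra care is harmless and arguably welcome.
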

The corollary follows directly from the fact that the pseudo inverse reduces to the inverse for $p=n$ and hence $\C(t)\tilde\A(t)+\dot\C(t)=\vc 0$.
Detectability of $\left(\tilde \A(t),\C(t)\right)$ is a sufficient condition for the detectability of $\left(\tilde \A(t),\tilde\C(t) \right)$ also for $p>q$. 

In the time invariant case, the result can also be simplified:
\begin{corollary}\label{le:det:lti}
	For time invariant systems, the pair $(\tilde\A,\C)$  is detectable if and only if $(\tilde\A,\tilde\C)$ is detectable.
\end{corollary}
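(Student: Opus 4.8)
My plan is to prove the statement in the stronger form that the pairs $(\tilde\A,\C)$ and $(\tilde\A,\tilde\C)$ have the \emph{same} unobservable subspace; detectability is then trivially equivalent for the two, since in the time-invariant case it depends only on the restriction of $\tilde\A$ to that subspace. In this case $\dot\C=\bm 0$, so by~\eqref{eq:Atilde} the augmented output matrix is simply $\tilde\C=\begin{bmatrix}\C^\transp & (\C\tilde\A)^\transp\end{bmatrix}^\transp$. I would begin by recalling that for a constant pair $(\M,\N)$ the unobservable subspace is the largest $\M$-invariant subspace contained in $\ker\N$, and that the pair is detectable precisely when $\M$ restricted to this subspace is Hurwitz.

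First I would record the easy inclusion: since $\ker\tilde\C\subseteq\ker\C$, every $\tilde\A$-invariant subspace inside $\ker\tilde\C$ lies inside $\ker\C$, so the unobservable subspace of $(\tilde\A,\tilde\C)$ is contained in that of $(\tilde\A,\C)$ --- this is the implication already noted in the discussion after the previous corollary. For the reverse inclusion, let $\mathcal V$ denote the unobservable subspace of $(\tilde\A,\C)$. Then $\mathcal V$ is $\tilde\A$-invariant and $\mathcal V\subseteq\ker\C$; hence for every $\bm v\in\mathcal V$ one has $\tilde\A\bm v\in\mathcal V\subseteq\ker\C$, i.e.\ $\C\tilde\A\bm v=\bm 0$. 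Therefore $\mathcal V\subseteq\ker\C\cap\ker(\C\tilde\A)=\ker\tilde\C$, and being $\tilde\A$-invariant it is contained in the largest such subspace, namely the unobservable subspace of $(\tilde\A,\tilde\C)$. The two unobservable subspaces thus coincide, the induced dynamics of $\tilde\A$ on them are identical, and so the two pairs are detectable under exactly the same condition.

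I do not expect any genuine obstacle here: the argument is a two-line consequence of the fact that appending $\C\tilde\A$ to the output adds no observability information beyond $\C$ alone. The same conclusion can equally be reached via the Hautus test --- a rank deficiency of $\begin{bmatrix}\lambda\I_n-\tilde\A\\ \C\end{bmatrix}$ for some $\lambda$ with $\operatorname{Re}\lambda\ge0$ is witnessed by an eigenvector $\bm v$ of $\tilde\A$ with $\C\bm v=\bm 0$, whence $\C\tilde\A\bm v=\lambda\,\C\bm v=\bm 0$ and $\begin{bmatrix}\lambda\I_n-\tilde\A\\ \tilde\C\end{bmatrix}$ is rank-deficient as well, so non-detectability of one pair forces non-detectability of the other. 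The only point worth a remark is that the extra rows $\C\tilde\A$ can refine observability only of generalized eigenvectors of $\tilde\A$, which are irrelevant to the eigenvector-based detectability criterion; this is also the reason the analogous simplification fails for genuinely time-varying systems.
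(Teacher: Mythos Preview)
Your argument is correct and is essentially the same approach as the paper's: the paper states only that the result ``follows directly from the equivalence of the null spaces of the corresponding observability matrices,'' and your proof spells out precisely this equivalence by identifying those null spaces with the unobservable subspaces and showing they coincide. Your additional Hautus-test remark is a sound alternative but not needed.
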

This result follows directly from the equivalence of the null spaces of the corresponding observability matrices.

\section{Numerical Example}\label{sec:numericalexample}
A linearized version of the Lorenz'96 model, see, e.g.~\cite{tranninger2020detectability,bocquet2017degenerate}, is investigated as a numerical example in this section. 
This nonlinear model proposed by E. Lorenz is widely used as a benchmark example in data-assimilation, see, e.g.,~\cite{bocquet2017degenerate,trevisan2011onthekalman}. 
The system is given by
	\begin{equation}\label{eq:lorenz96}
		\dot{z}_i=(z_{i+1}-z_{i-2})z_{i-1}-z_i + f_i(t),\; i=1,\ldots,n,
	\end{equation}
with $z_{-1}\coloneqq z_{n-1}$, $z_{0}\coloneqq z_n$, and \mbox{$z_{n+1}\coloneqq z_1$}. 
The state vector is $\z = [z_1\;\cdots \; z_n]^\transp\in\mathds{R}^n$. 
The system order and the output are chosen as $n=18$ and \begin{equation}\label{eq:lorez96output}
	\y(t) = \begin{bmatrix} \e_1 &
\e_5&
\e_9&
\e_{12}&	
\e_{15} \end{bmatrix}^\transp \z
	\end{equation}
respectively, where $\e_i$ is the $i$-th euclidian standard basis vector.
For a constant forcing $f_i(t)=f_0=8$, the system exhibits a chaotic behaviour~\cite{bocquet2017degenerate}.

The unknown input is assumed to act as a perturbation of the constant forcing $f_i(t)=f_0+d_i(t)w(t)$ with $w(t)$ as the scalar unknown input. 
The input matrix is chosen as 
$\D(t) = \left[0 \; \cdots \; 0 \quad d_5(t) \quad 0 \; \cdots\; 0 \quad d_{12}(t) \quad 0 \; \cdots \;\right]^\transp$
with $d_5(t)=\sin(t)$ and $d_{12}(t)=\cos(t)$. 
The unknown input is chosen as $w(t)=\exp(0.5t)$ and is thus unbounded.
The approaches presented in~\cite{galvan-guerra2016highorder,tranninger2018sliding} are hence not applicable.

The linear time varying system in the form of~\eqref{eq:system} is obtained by linearizing ~\eqref{eq:lorenz96} and~\eqref{eq:lorez96output} along a trajectory.
Since no analytical solution for the nonlinear system is available, a numerical solution was computed for the unperturbed case, i.e., $\w= 0$, with the initial conditions $z_{i}(0)= \sin\left(\frac{i-1}{n}2\pi\right)$ for $i=1,\ldots,n$. 

In order to conduct a detectability analysis of the pair $\left(\tilde\A(t),\C(t) \right)$, simulation studies were performed.
These showed that the pair is detectable, i.e., the solution of the corresponding Riccati equation~\eqref{eq:strongobsv:Riccati} remains uniformly bounded.
This simplifies the observer design, because $\tilde \C(t)$ can be replaced with $\C(t)$ in equation~\eqref{eq:strongobsv:Riccati} and hence $\L_2(t)=\bf 0$.
For $\P(0)=\I_n$ and $\Q(t)=10\I_n$, the minimum and maximum singular value of $\P(t)$ are shown in~Fig.\ref{fig:sigmap}. 

The initial estimation error is chosen randomly from a uniform distribution in the interval $(-1,1)$.
The norm of the estimation error is depicted in Fig.~\ref{fig:logerrornorm} and components of the estimation error are shown in Fig.~\ref{fig:errors}.
One can see that the error norm decays exponentially and uniform exponential stability of the error system is verified with the simulation results. 
Moreover, the error system is independent of the (unbounded) unknown input, as predicted by the theoretical findings.

\begin{figure}[tbp]
	\centering
	\includegraphics[width=0.95\linewidth]{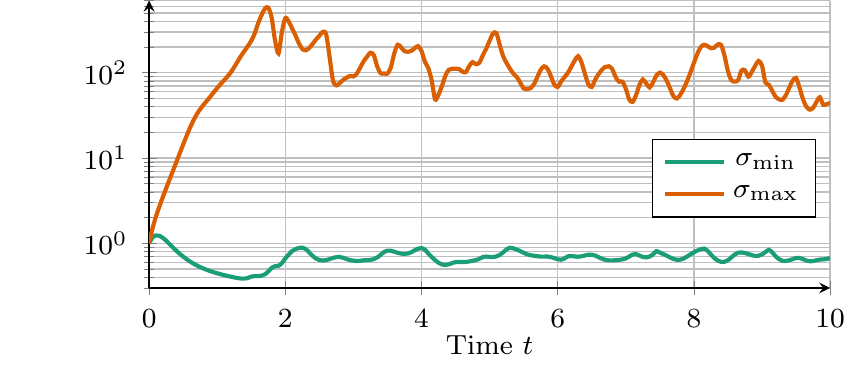}
	\caption{Minimum and maximum singular value of $\P(t)$.}
	\label{fig:sigmap}
\end{figure}

\begin{figure}[tbp]
	\centering
	\includegraphics[width=0.95\linewidth]{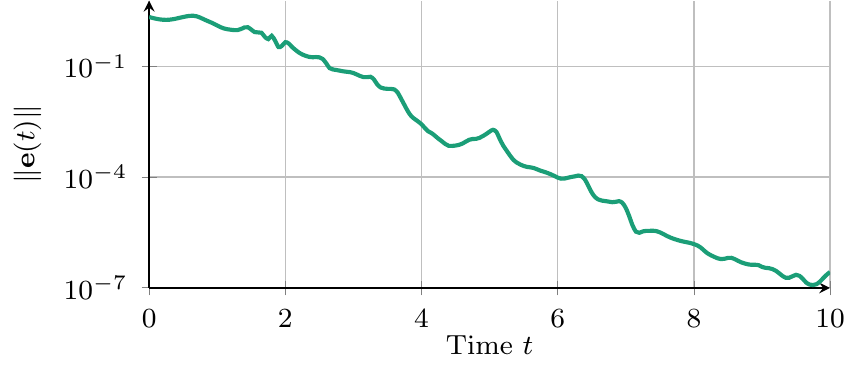}
	\caption{Two-norm of the estimation error.}
	\label{fig:logerrornorm}
\end{figure}

\begin{figure}[tbp]
	\centering
	\includegraphics[width=0.95\linewidth]{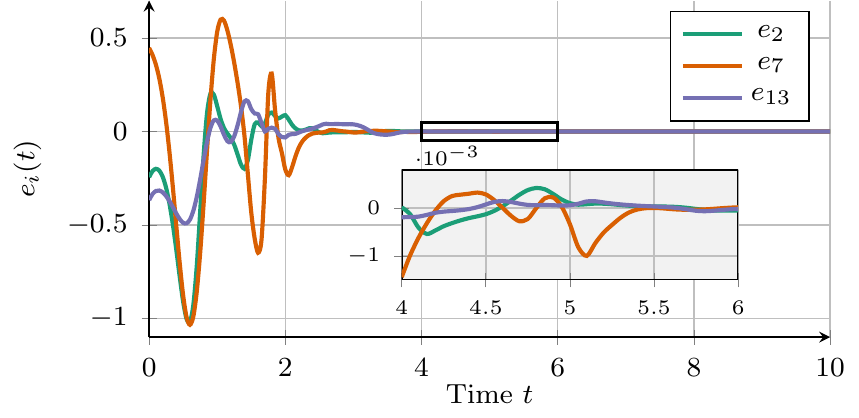}
	\caption{Components of the estimation error.}
	\label{fig:errors}
\end{figure}

\section{Discussion and Outlook}\label{sec:discussion}
The present work introduces a uniform strong detectability notion and an observer design procedure for linear time varying systems.
The proposed observer design allows to asymptotically reconstruct the system states in the presence of arbitrary unknown inputs.
In future, this could also be studied in combination with reduced observer design methods as presented in~\cite{tranninger2020detectability}.
It would be interesting to link uniform strong$^*$ detectability with the time varying zero dynamics notion introduced in~\cite{ilchmann2007time-varying} or with time varying characterizations of system zeros in order to obtain additional insights as in the time invariant case.

\appendix

\balance%
\bibliographystyle{elsarticle-num}
\bibliography{bibliography} 

\end{document}